 \IEEEoverridecommandlockouts\IEEEpubid{\makebox[\columnwidth]{ 978-1-6654-3540-6/22/\$31.00~\copyright~2022 IEEE \hfill} \hspace{\columnsep}\makebox[\columnwidth]{ }}
\theoremstyle{remark}
\theoremstyle{definition}
\newtheorem{rem}{Remark}
\newtheorem{theorem}{Theorem}
\newtheorem{corollary}{Corollary}
\renewcommand{\[}{\begin{equation}}
\renewcommand{\]}{\end{equation}}
\renewcommand{\mathbf}{\bm}
\newcommand{\de}{\mathrm{d}}
\newcommand{\circConv}{\mathop{\vphantom{\sum}\mathchoice
  {\vcenter{\hbox{\huge $\circledast$}}}
  {\vcenter{\hbox{\Large A}}}{\mathrm{A}}{\mathrm{A}}}\displaylimits}
\newcommand{\hadProd}{\mathop{\vphantom{\sum}\mathchoice
  {\vcenter{\hbox{\huge $\odot$}}}
  {\vcenter{\hbox{\Large A}}}{\mathrm{A}}{\mathrm{A}}}\displaylimits}
\DeclareMathOperator{\weight}{wt}
\DeclareMathOperator{\dist}{d}
\newcommand{\ring}{\mathbb{Z}}
\newcommand{\unitset}{\mathbb{Z}_q^{\times}}
\newcommand{\LW}{\weight_{\scriptscriptstyle\mathsf{L}}}
\newcommand{\LD}{\dist_{\scriptscriptstyle\mathsf{L}}}
\newcommand{\code}{\mathcal{C}}
\newcommand{\ens}{\mathscr{C}}
\renewcommand{\H}{\bm{H}}
\newcommand{\vecu}{\bm{u}}
\newcommand{\vecv}{\bm{v}}
\newcommand{\vecx}{\bm{x}}
\newcommand{\vecy}{\bm{y}}
\newcommand{\vece}{\bm{e}}
\newcommand{\vecE}{\bm{E}}
\newcommand{\vecf}{\bm{f}}
\newcommand{\vecL}{\bm{L}}
\newcommand{\vecc}{\bm{\phi}}
\newcommand{\permMat}{\bm{\Pi}}
\newcommand{\entH}{\mathrm{H}_e}
\newcommand{\vn}{\mathsf{v}}
\newcommand{\cn}{\mathsf{c}}
\newcommand{\neigh}[1]{\ensuremath{\mathcal{N}\left(#1\right)}}
\newcommand{\msg}[2]{\ensuremath{m_{#1\rightarrow#2}}}
\newcommand{\mch}{\ensuremath{m_{\mathsf{ch}}}}
\newcommand{\mchVec}{\ensuremath{\bm{m}_{\mathsf{ch}}}}
\newcommand{\msgVec}[2]{\ensuremath{\bm{m}_{#1\rightarrow#2}}}
\newcommand{\msgRV}[2]{\ensuremath{M_{#1\rightarrow#2}^{(\ell)}}}
\newcommand{\thr}{\ensuremath{\delta^\star}}
\newcommand{\thrSMP}{\ensuremath{\delta^\star_{\scriptscriptstyle\mathsf{SMP}}}}
\newcommand{\thrNBP}{\ensuremath{\delta^\star_{\scriptscriptstyle\mathsf{BP}}}}
\newcommand{\thrSH}{\ensuremath{\delta^\star_{\scriptscriptstyle\mathsf{SH}}}}
\begin{document}


\title{Analysis of Low-Density Parity-Check Codes over Finite Integer Rings for the Lee Channel}

\author{
\IEEEauthorblockN{Jessica Bariffi}
\IEEEauthorblockA{
\textit{German Aerospace Center}\\
Wessling, Germany \\
jessica.bariffi@dlr.de}
\and
\IEEEauthorblockN{Hannes Bartz}
\IEEEauthorblockA{
\textit{German Aerospace Center}\\
Wessling, Germany \\
hannes.bartz@dlr.de}
\and
\IEEEauthorblockN{Gianluigi Liva}
\IEEEauthorblockA{
\textit{German Aerospace Center}\\
Wessling, Germany \\
gianluigi.liva@dlr.de}
\and
\IEEEauthorblockN{Joachim Rosenthal}
\IEEEauthorblockA{
\textit{University of Zurich}\\
Zurich, Switzerland \\
rosenthal@math.uzh.ch}
    \thanks{J. Rosenthal has been supported
    in part by the Swiss National Science Foundation under the grant No. 188430. J. Bariffi, H. Bartz and G. Liva acknowledge the financial support by the Federal Ministry of Education and Research of Germany in the programme of "Souver\"an. Digital. Vernetzt." Joint project 6G-RIC, project identification number: 16KISK022.
    }}

 \maketitle

 \IEEEoverridecommandlockouts
 



\begin{abstract}
We study the performance of nonbinary low-density parity-check (LDPC) codes over finite integer rings over two channels that arise from the Lee metric. The first channel is a discrete memory-less channel (DMC) matched to the Lee metric. The second channel adds to each codeword an error vector of constant Lee weight, where the error vector is picked uniformly at random from the set of vectors of constant Lee weight. It is shown that the marginal conditional distributions of the two channels coincide, in the limit of large block length. Random coding union bounds on the block error probability are derived for both channels. Moreover, the performance of selected LDPC code ensembles is analyzed by means of density evolution and finite-length simulations, with belief propagation decoding and with a low-complexity symbol message passing algorithm and it is compared to the derived bounds.
\end{abstract}





\begin{acronym}
    \acro{BP}{belief propagation}
    \acro{VN}{variable node}
    \acro{CN}{check node}
    \acro{DE}{density evolution}
    \acro{EXIT}{extrinsic information transfer}
    \acro{i.i.d.}{independent and identically distributed}
    \acro{LDPC}{low-density parity-check}
    \acro{MAP}{maximum a posteriori probability}
    \acro{MCM}{Monte Carlo method}
    \acro{r.v.}{random variable}
    \acro{p.m.f.}{probability mass function} 
    \acro{ML}{maximum likelihood}
    \acro{WEF}{weight enumerating function}
    \acro{MDS}{maximum distance separable}
    \acro{AWE}{average weight enumerator}
    \acro{DMC}{discrete memory-less channel}
    \acro{w.r.t.}{with respect to}
    \acro{BSC}{binary symmetric channel}
    \acro{SMP}{symbol message-passing}
    \acro{RV}{random variable}
    \acro{PMF}{probability mass function}
    \acro{qSC}[$q$-SC]{$q$-ary symmetric channel}
    \acro{RCU}{random coding union}
    \acro{BMP}{binary message-passing}
    \acro{PEG}{progressive edge growth}
    \acro{LSF}{Lee Symbol Flipping}
    \acro{TV}{total variation}
\end{acronym}


\section{Introduction}\label{sec:not}

The construction of channel codes for the Lee metric \cite{Ulrich,lee1958some} attracted some attention in the past \cite{prange1959use,berlekamp1966negacyclic,golomb1968algebraic,chiang1971channels,etzion2010dense}. Currently, codes for the Lee metric are considered for cryptographic applications \cite{weger2020hardness,weger2020information} thanks to their potential in decreasing the public key size in code-based public-key cryptosystems. Furthermore, codes for the Lee metric have potential applications in the context of magnetic  \cite{roth1994lee} and DNA \cite{gabrys2017asymmetric} storage systems. 

In this paper, we analyze the performance of certain code classes in the context of Lee metric decoding. In particular, we consider two channel models. The first model is a \ac{DMC} \emph{matched} to the Lee metric \cite{massey1967notes,chiang1971channels}, i.e., the \ac{DMC} whose \ac{ML} decoding rule reduces to finding the codeword at minimum Lee distance from the channel output. The second model is a channel that adds to each codeword an error vector of constant Lee weight, where the error vector is picked uniformly at random from the set of length-$n$ vectors of constant Lee weight (here, $n$ is the block length). The first model will be referred to as the \emph{Lee channel}, whereas the second model will be dubbed \emph{constant-weight Lee channel}. It will be shown that the marginal conditional distribution of the constant-weight Lee channel reduces to the conditional distribution of a suitably-defined (memory-less) Lee channel, as $n$ grows large. Random coding bounds are derived for both channels, providing a finite-length performance benchmark to evaluate the block error probability of practical coding schemes. We then study the performance of nonbinary \ac{LDPC} codes \cite{studio3:GallagerBook} over finite rings \cite{Fuja05:Ring}, in the context of Lee-metric decoding. The codes will be analyzed, from a code ensemble viewpoint, via density evolution. Two decoding algorithms will be considered, namely the well-known (nonbinary) \ac{BP} algorithm \cite{DM98,Fuja05:Ring} and the recently-introduced low-complexity \ac{SMP} algorithm \cite{Lazaro19:SMP}, where the latter will be adapted to the Lee channel (the \ac{SMP} was originally defined for $q$ary symmetric channels only). We will compare the performance of the two decoding algorithms to the \ac{LSF} presented in \cite[Algorithm 2]{santini2020low} for LDPC Codes in the Lee metric. The \ac{SMP} decoding algorithm, thanks to its low complexity, is of practical interest for code-based cryptosystems \cite{santini2020low}. To simplify the exposition, the analysis will be limited to regular \ac{LDPC} code ensembles (that are mainly considered for code-based cryptography). The extension of the analysis to irregular and protograph-based \ac{LDPC} code ensembles is straightforward. Finite-length simulation results will be provided for both the Lee and the constant-weight Lee channels, and will be compared with finite-length benchmarks.

The paper is organized as follows. Section \ref{sec:prel} provides some definitions and useful results. The channel models are introduced in Section \ref{sec:leeChannel}, together with finite-length performance bounds.  In Section \ref{sec:lee_ldpc} we analyse the performance of \ac{LDPC} codes over Lee channels. Conclusions follow in Section \ref{sec:conc}.


\section{Preliminaries}\label{sec:prel}

Let $\ring_q$ be the ring of integers modulo $q$. In the following, all logarithms are in the natural base. The set of units of a ring $\ring_q$ is indicated by $\unitset$. We  denote random variables by uppercase letters, and their realizations with lower case letters. 
Moreover, we use the shorthand $[x]^+$ to denote $\max(0,x)$.


The Lee weight \cite{lee1958some} of a scalar $a \in \ring_q$ is
\begin{align*}
	\LW(a) := \min(a, q-a).
\end{align*}
The Lee weight of a vector $\vecx \in \ring_q^n$ is defined to be the sum of the Lee weights of its elements, i.e.,
\[
	\LW(\vecx) = \sum_{i=1}^n \LW(x_i).
\]
Note that the Lee weight of an element $a \in \ring_q$ is upper bounded by $\lfloor q/2 \rfloor$. Hence, the Lee weight of a length-$n$ vector $\vecx$ over $\ring_q$ is upper bounded by $n\cdot \lfloor q/2 \rfloor$. To simplify the notation, we will always denote $r := \lfloor q/2 \rfloor$. We have that
\begin{align}\label{property:symm_leeweight}
	\LW (a) = \LW (q- a) \; \text{ for every } a \in \lbrace 1, \dots , r \rbrace .
\end{align}
The Lee distance of two scalars $a,b \in \ring_q$ is
$\LD (a, b) := \LW(a-b)$.
The Lee distance between  $\vecx, \vecy \in \ring_q^n$ is
\begin{align*}
	\LD (\vecx, \vecy) = \sum_{i=1}^n \LD(x_i,y_i).
\end{align*}

\subsection{Low-Density Parity-Check Codes over Finite Integer Rings}
We will consider $(n,k)$ linear block codes over $\ring_q$ and we denote by $R=k/n$ the code rate. An $(n,k)$ \ac{LDPC} code over $\ring_q$ \cite{Fuja05:Ring} is defined by a $m\times n$ \emph{sparse} matrix $\H$, which can be described via a bipartite graph $\mathcal{G}$ consisting of a set of $n$ \acp{VN} $\{\vn_0,\vn_1,\ldots,\vn_{n-1}\}$ and a set of $m$ \acp{CN} $\{\cn_0,\cn_1,\ldots,\cn_{m-1}\}$ where the \ac{VN} $\vn_j$ is connected with an edge to the \ac{CN} $\cn_i$ if and only if the entry $h_{i,j}$ in $\H$ is nonzero. The degree of a node refers to the number of edges that are connected to the node. The neighbors of a \ac{VN} $\vn$ is the set $\neigh{\vn}$ composed by \acp{CN} that are connected to $\vn$ by an edge. Similarly, the neighbors of a \ac{CN} $\cn$ is the set $\neigh{\cn}$ composed by \acp{VN} that are connected to $\cn$ by an edge. We denote by $\ens_{v,c}^n$ the unstructured regular (length-$n$) \ac{LDPC} code ensemble, i.e., the set of codes defined by an $m \times n$ matrix $\H$ whose bipartite graph possesses constant \ac{VN} degree $v$ and constant \ac{CN} degree $c$.
We denote the ensemble design rate as $R_0=1-m/n$. 
When sampling an \ac{LDPC} code from the given ensemble, we assume the nonzero entries drawn independently and uniformly from $\unitset$ as proposed in \cite{Fuja05:Ring}.

\subsection{Useful Results and Definitions}

Letting $a_n$ and $b_n \neq 0$ be two real-valued sequences, we say that $a_n$ and $b_n$ are exponentially equivalent as $n \rightarrow \infty$, writing $a_n \doteq b_n$ if and only if \cite[Ch. 3.3]{CoverThomasBook}
\begin{align*}
	\lim_{n \rightarrow \infty} \frac{1}{n}\log \left(\frac{a_n}{b_n}\right) =0 \, .
\end{align*}
We denote by $\vecf(\vecx)=(f_0(\vecx),f_1(\vecx),\ldots,f_{q-1}(\vecx))$ the composition (i.e., empirical distribution) of a vector $\vecx\in\ring_q^n$, i.e., $f_i$ is the relative frequency of $i$ in $\vecx$. We introduce the set
\begin{equation}
	\mathcal{S}_{n\delta}^n:=\left\{\vecx\, \big| \, \vecx\in \ring^n_q, \LW(\vecx)=n\delta\right\}.
\end{equation}
Here, $\mathcal{S}_{n\delta}^n$ defines the surface of radius-$n\delta$ $n$-dimensional Lee sphere. The set of vectors in $\ring^n_q$ with composition $\vecc$ is 
\begin{equation}
	\mathcal{T}_{\vecc}^n:=\left\{\vecx\, \big| \, \vecx\in \ring^n_q, \vecf(\vecx)=\vecc\right\}.
\end{equation}
We have that \cite[Ch. 11.1]{CoverThomasBook}
\begin{equation}
	\left|\mathcal{T}_{\vecc}^n\right| \doteq \exp(n\entH(\vecc)) \label{eq:exponential_equiv}
\end{equation}
where 
\begin{equation}
\entH(\vecc):=-\!\!\sum_{i=0, \phi_i\neq 0}^{q-1}\!\! \phi_i \log \phi_i.\label{eq:entropy}
\end{equation}


\section{The Lee Channel}\label{sec:leeChannel}
For $x, y, e \in \ring_q$ consider the \ac{DMC} 
\begin{equation}
	y=x+e
\end{equation} 
where $y$ is the channel output, $x$ the channel input, and $e$ is an additive error term. More specifically, we restrict to the case where $e$ is a realization of a \ac{RV} $E$, distributed as $P_E(e)\propto \exp(-\beta \LW(e))$,
where $\beta > 0$ is a constant that defines (together with the alphabet) the channel. Defining the normalization constant 
\[
	Z(\beta):= \sum_{e=0}^{q-1}\exp(-\beta \LW(e))   \label{eq:partition_function} 
\] we get the channel law
\begin{equation}
	P_{Y|X}(y|x)=\frac{1}{Z}\exp\left(-\beta\LD(x,y)\right).\label{eq:Lee_channel}
\end{equation}
We refer next to the channel defined in \eqref{eq:Lee_channel} as the \emph{Lee channel}. We denote the expectation of $\LW(E)$ as $\delta$, given by \cite{mezard2009information}
\[
\delta = -\frac{\de \log Z(\beta)}{\de \beta}. \label{eq:delta}
\]
Our interest in  \eqref{eq:Lee_channel} stems from two observations:
\begin{itemize}
\item[i.] The channel defined in \eqref{eq:Lee_channel} is the \ac{DMC} matched to the Lee metric \cite{massey1967notes,chiang1971channels}, i.e., the channel whose \ac{ML} decoding rule reduces to finding the codeword $\vecx\in\code$ that minimizes the Lee distance from the channel output $\vecy$;
\item[ii.] The conditional distribution \eqref{eq:Lee_channel} arises (in the limit of large $n$) as the marginal distribution of a channel (in the following, referred to as an \emph{constant-weight Lee channel}) adding to the transmitted codeword an error pattern drawn uniformly at random from a set of vectors of constant Lee weight. This is especially interesting for code-based public-key cryptosystems in the Lee metric \cite{weger2020hardness}.
\end{itemize}
A derivation of the result in ii. is given next.

\subsection{Marginal Distribution of Constant-Weight Lee Channels}\label{subsec:marg_distrib}
Consider a constant-weight Lee channel 
\begin{equation}
	\vecy=\vecx+\vece
\end{equation} 
with $\vecy, \vecx, \vece \in \mathbb{Z}^n_q$, and where $\vece$ is drawn, with uniform probability, from the set $\mathcal{S}_{n\delta}^n$.
We have 
$P_{\vecE}(\vece)=\left|\mathcal{S}_{n\delta}^n\right|^{-1}$ for all $\vece\in\mathcal{S}_{n\delta}^n$, with $P_{\vecE}(\vece)=0$ otherwise. We are interested the marginal distribution $P_E(e)$ in the limit for $n\rightarrow \infty$. The marginal distribution plays an important role, for instance, in the initialization of iterative decoders of \ac{LDPC} codes, when used over constant-weight Lee channels \cite{santini2020low}. While the focus here is in the asymptotic (in the block length $n$) case, the derived marginal distribution provides an excellent approximation of the true marginal down to moderate-length blocks ($n$ in the order of a few hundreds).
The derivation follows by seeking the composition that dominates the set $\mathcal{S}_{n\delta}^n$. More specifically, we should look for the empirical distribution $\vecc$ that maximizes the cardinality of $\mathcal{T}_{\vecc}^n$ under the constraint 
\[
\sum_{i=0}^{q-1} \LW(i) \phi_i=\delta.\label{eq:constraint}
\]
The task is closely related to the problem, in statistical mechanics, of finding the distribution of a systems state by relating it to that states energy and temperature  \cite{boltzmann1868studien,gibbs1902elementary}. Owing to \eqref{eq:exponential_equiv}, and taking the limit for $n\rightarrow \infty$, we will look for the empirical distribution maximizing the entropy \eqref{eq:entropy} under the constraint \eqref{eq:constraint} \cite[Ch. 12]{CoverThomasBook}, i.e.
\[
\vecc^\star = \arg\max \entH(\vecc) 
\]
with 
$
\sum_{i=0}^{q-1} \LW(i) \phi_i=\delta$.
By introducing the Lagrange multiplier $\beta$, we aim at finding the maximum in $\vecc$ of 
\[
\mathsf{f}(\vecc,\beta):=\entH(\vecc) - \beta\left(\sum_{i=0}^{q-1} \LW(i) \phi_i-\delta\right).
\]
The result yields the distribution
\[
\phi_i^\star=\frac{1}{Z}\exp\left(-\beta\LW(i)\right) \label{eq:boltzmann} 
\]
with $Z$ given in \eqref{eq:partition_function} and $\beta$ obtained by enforcing the condition \eqref{eq:constraint} (i.e., by solving \eqref{eq:delta} in $\beta$). The distribution \eqref{eq:boltzmann} is closely related to the Boltzmann distribution \cite{boltzmann1868studien,CoverThomasBook}, which may be recovered by interpreting the Lee weight $\LW(i)$ as an energy value. Notably, when drawing $\vece$ with uniform probability from the set $\mathcal{S}_{n\delta}^n$, $\vece$ will possess an empirical distribution close to $\vecc^\star$ with high probability as $n$ grows large. The result follows by the conditional limit theorem \cite[Theorem 11.6.2]{CoverThomasBook}.

\newcommand{\Hdelta}{\mathsf{H}_\delta}
\newcommand{\Hdeltaplus}{\mathsf{H}^+_\delta}
\subsection{Bounds on the Block Error Probability}
Denote the natural entropy of a random variable distributed according to $\vecc^\star$ with mean $\delta$, $\entH(\vecc^\star)$, as $\Hdelta$. Moreover, let
\[
\delta_q := \left\{
\begin{array}{ll}
     (q^2-1)/4q&  \text{if }q\text{ is odd}\\
     q/4& \text{if }q\text{ is even}
\end{array}
\right.
\]
and the function
\[
\Hdeltaplus := \left\{
\begin{array}{ll}
     \Hdelta&  \text{if }\delta\leq \delta_q\\
     \log q& \text{otherwise}.
\end{array}
\right.
\]
The following theorem establishes a \ac{RCU} bound, providing an upper bound on the error probability, $P_B(\code)$, achievable by the best $(n,nR)$ code $\code$ on $\ring_q$ over a constant-weight Lee channel with normalized  error vector weight equal to $\delta$.
\begin{theorem}\label{th:RCUconst}
The expected error probability of a random $(n,nR)$ code $\code$ on $\ring_q$ when used to communicate over a constant-weight Lee channel with normalized weight of the error vector equal to $\delta$ satisfies
\begin{equation}
    \mathbb{E}\left[P_B(\code)\right]<\exp\left(-n\left[(1-R)\log q - \Hdeltaplus\right]^+\right).
\end{equation}
\end{theorem}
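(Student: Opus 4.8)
The plan is to establish a standard random coding union bound by averaging over the Shannon ensemble in which the $M = q^{nR}$ codewords are drawn independently and uniformly from $\ring_q^n$, and then to control the resulting pairwise term through a Chernoff (exponential tilting) bound on the size of the Lee ball, reusing the Boltzmann distribution $\vecc^\star$ and the partition function $Z(\beta)$ already identified in Section~\ref{subsec:marg_distrib}. First I would fix the decoder to be minimum-Lee-distance decoding (the ML rule for the matched Lee channel and the natural rule here) and, by the symmetry of the ensemble, condition on a fixed transmitted codeword $\vecx_0$, so that $\vecy = \vecx_0 + \vece$ with $\LW(\vece) = n\delta$ exactly. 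A decoding error (counting ties as errors) occurs only if some other codeword $\vecx'$ satisfies $\LD(\vecx',\vecy) \le \LD(\vecx_0,\vecy) = n\delta$, i.e.\ only if $\vecx'$ lies in the Lee ball $\mathcal{B}_{n\delta}^n := \{\vecx \in \ring_q^n : \LW(\vecx) \le n\delta\}$ centred at $\vecy$.

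Since each competing codeword is uniform on $\ring_q^n$ and independent of $\vecy$, the difference $\vecy - \vecx'$ is uniform on $\ring_q^n$, so the pairwise probability that $\vecx'$ falls in the ball is exactly $|\mathcal{B}_{n\delta}^n|/q^n$, independent of $\vecy$. Applying the union bound over the $M-1$ competitors and truncating at $1$ (both the union bound and the $\min(1,\cdot)$ operation yield valid upper bounds, the latter via concavity) gives the random coding union form
\[
\expect{P_B(\code)} \le \min\!\left(1,\ (M-1)\,\frac{|\mathcal{B}_{n\delta}^n|}{q^n}\right).
\]

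The crux is then to bound the ball size by $|\mathcal{B}_{n\delta}^n| \le \exp(n\Hdeltaplus)$ \emph{without} any polynomial prefactor. For $\delta \le \delta_q$ I would use exponential tilting: for every $\beta \ge 0$,
\[
|\mathcal{B}_{n\delta}^n| = \!\!\sum_{\vecx:\,\LW(\vecx)\le n\delta}\!\!1 \ \le \sum_{\vecx\in\ring_q^n} e^{\beta(n\delta-\LW(\vecx))} = e^{\beta n\delta}\,Z(\beta)^n,
\]
and choosing the $\beta$ that solves \eqref{eq:delta} yields $\tfrac1n\log|\mathcal{B}_{n\delta}^n| \le \beta\delta + \log Z(\beta) = \entH(\vecc^\star) = \Hdelta$, where the middle equality is exactly the Legendre-type identity linking the Gibbs distribution \eqref{eq:boltzmann} to its entropy. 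For $\delta > \delta_q$ the mean constraint is inactive and I would simply invoke the trivial bound $|\mathcal{B}_{n\delta}^n| \le q^n = \exp(n\log q) = \exp(n\Hdeltaplus)$.

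Substituting $M = q^{nR}$ and $|\mathcal{B}_{n\delta}^n| \le \exp(n\Hdeltaplus)$ into the displayed bound, using $M-1 < M$ for the strict inequality, and rewriting $\min(1,e^{-na}) = e^{-n[a]^+}$ produces the claimed exponent $[(1-R)\log q - \Hdeltaplus]^+$. The main obstacle is precisely the prefactor-free ball bound: obtaining a clean exponential, rather than one burdened by the $(n+1)^{q}$ factor of a naive method-of-types count, is what makes the tilting argument tied to the already-derived $Z(\beta)$ and $\vecc^\star$ essential, and the cap $\Hdeltaplus=\log q$ for $\delta > \delta_q$ reflects exactly the regime in which the Lee ball occupies an exponentially dominant fraction of $\ring_q^n$.
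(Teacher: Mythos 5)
Your proposal is correct and follows essentially the same route as the paper: the paper's (very brief) proof likewise invokes the random coding union bound of Polyanskiy et al.\ with the pairwise error probability evaluated as the chance that a uniform competitor lands in the Lee sphere of radius $n\delta$, whose volume is bounded by $\exp(n\Hdeltaplus)$. The only difference is that you actually supply the prefactor-free volume bound via exponential tilting and the identity $\beta\delta+\log Z(\beta)=\Hdelta$, whereas the paper simply asserts it.
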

\begin{proof}
    The proof of Theorem \ref{th:RCUconst} is based on \cite[Theorem 16]{CCR10}, where for the evaluation of the pair-wise error probability we determine the probability of generating a random codeword that lies within a Lee sphere of radius $n\delta$. By noticing that the volume of such a sphere is tightly upper bounded by $\exp(n \Hdeltaplus)$, the result follows.
\end{proof}
Note that the bound provided in Theorem \ref{th:RCUconst} can easily be extended to the memoryless Lee channel by averaging over the distribution of the Lee weight $D$ of the error pattern, yielding the following corollary.
\begin{corollary}\label{th:RCU}
The expected error probability of a random $(n,nR)$ code $\code$ on $\ring_q$ when used to communicate over a memoryless Lee channel with parameter $\delta$ satisfies
\begin{equation}
    \mathbb{E}\left[P_B(\code)\right]<\mathbb{E}\left[\exp\left(-n\left[(1-R)\log q - {\mathsf{H}^+_{D/n}}\right]^+\right)\right].
\end{equation}
\end{corollary}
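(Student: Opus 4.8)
The plan is to reduce the memoryless case to the constant-weight case already settled in Theorem \ref{th:RCUconst}, by conditioning on the total Lee weight of the error pattern. The key observation is that for the memoryless Lee channel the error vector has probability $P_{\vecE}(\vece)=\prod_{i=1}^n P_E(e_i)=Z^{-n}\exp\left(-\beta\LW(\vece)\right)$, which depends on $\vece$ \emph{only} through its total Lee weight $\LW(\vece)$. Consequently, conditioned on the event $D:=\LW(\vecE)=d$, all error vectors of Lee weight $d$ are equally likely; that is, $\vecE$ is uniformly distributed over the sphere $\mathcal{S}_{n\delta}^n$ with $\delta=d/n$. Hence, conditioned on $D=d$, the memoryless Lee channel coincides exactly with a constant-weight Lee channel of normalized error weight $\delta=d/n$. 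This identification is what makes Theorem \ref{th:RCUconst} applicable term by term.

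First I would write, for a fixed code $\code$, the decomposition $P_B(\code)=\sum_{d}P(D=d)\,P_B(\code\mid D=d)$ by the law of total probability, noting that the law of $D$ is a property of the channel alone and is independent of the code (the error is added irrespective of the transmitted codeword, so the distribution of $D$ is the $n$-fold convolution of the per-symbol Lee-weight distribution). Averaging over the code ensemble and interchanging the finite sum over $d$ with the ensemble expectation—legitimate by linearity, since the sum is finite ($D$ takes at most $nr+1$ values) and the coefficients $P(D=d)$ are constants—gives $\expect{P_B(\code)}=\sum_d P(D=d)\,\expect{P_B(\code\mid D=d)}$.

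By the identification established above, each conditional term $\expect{P_B(\code\mid D=d)}$ is precisely the ensemble-averaged block error probability over a constant-weight Lee channel with $\delta=d/n$, so Theorem \ref{th:RCUconst} applies verbatim and yields $\expect{P_B(\code\mid D=d)}<\exp\left(-n\left[(1-R)\log q-\mathsf{H}^+_{d/n}\right]^+\right)$. Substituting this into the decomposition and reading the resulting weighted sum back as an expectation over $D$ produces $\expect{P_B(\code)}<\expect{\exp\left(-n\left[(1-R)\log q-\mathsf{H}^+_{D/n}\right]^+\right)}$, which is the claim; strictness is preserved because at least one value of $d$ carries positive probability.

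The computation is routine; the only step requiring genuine care is the collapse of the product-form error distribution to a function of the total Lee weight, which is exactly what identifies the conditioned memoryless channel with a constant-weight Lee channel. I therefore expect this identification to be the crux, with the remaining manipulations amounting to the law of total expectation combined with Theorem \ref{th:RCUconst}.
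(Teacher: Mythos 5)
Your proposal is correct and follows essentially the same route as the paper, which obtains the corollary by averaging the bound of Theorem \ref{th:RCUconst} over the distribution of the total Lee weight $D$ of the error pattern. Your explicit justification that conditioning the memoryless channel on $\LW(\vecE)=d$ yields exactly a constant-weight Lee channel (since the product-form error law depends on $\vece$ only through $\LW(\vece)$) is the right, and indeed the only nontrivial, ingredient, which the paper leaves implicit.
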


\section{LDPC Codes: Analysis over the Lee Channel}\label{sec:lee_ldpc}

We review first two message-passing decoders for nonbinary \ac{LDPC} codes, i.e., the well-known \ac{BP} algorithm \cite{DM98,Fuja05:Ring} and the \ac{SMP} algorithm introduced in \cite{Lazaro19:SMP}. We then analyze the performance achievable by the two algorithms in an asymptotic setting (via \ac{DE} analysis) and at finite block length (via Monte Carlo simulations).

\subsection{Message-Passing Decoders}

\subsubsection{Belief Propagation Decoding}
We now consider first \ac{BP} decoding of nonbinary \ac{LDPC} codes defined on rings. The decoding algorithm is outlined below.

\begin{enumerate}
    \item \textbf{Initialization.} Define the likelihood at \ac{VN} by $\vn$ $\mchVec:=\left(P_{Y|X}(y\mid 0),\dots,P_{Y|X}(y\mid q-1)\right)$, i.e., $\mchVec$ is the \ac{PMF} associated with the channel observation for the \ac{VN} $\vn$. Let $\permMat_{\cn,\vn}$ be the permutation matrix induced by the parity-check matrix element $h_{\cn,\vn}$ (associated with the edge between \ac{CN} $\cn$ and \ac{VN} $\vn$). In the first iteration, each \ac{VN} $\vn$ sends to all $\cn\in\neigh{\vn}$ the message 
     \begin{equation}
         \msgVec{\vn}{\cn}=\mchVec\permMat_{\cn,\vn}
     \end{equation}
    \item \textbf{CN-to-VN step.} Let $\circledast$ denote the circular convolution of the \acp{PMF}. Then, each \ac{CN} $\cn$ computes \label{itm:CN-to-VN-NBP}
    \begin{equation}
\vecu=\circConv_{\vn'\in\neigh{\cn}\setminus\{\vn\}}\msgVec{\vn'}{\cn}
    \end{equation}
    Given the $(q\times q)$ inverse permutation matrix, $\permMat_{\vn,\cn}^{-1}$, associated with $h_{\cn,\vn}^{-1}$ the \ac{CN}-to-\ac{VN} message is then
    \begin{equation}
        \msgVec{\cn}{\vn}=\vecu\cdot\permMat_{\cn,\vn}^{-1}.
    \end{equation}
    \item \textbf{VN-to-CN step.} We denote by $\odot$ the element-wise Hadamard (or Schur) product of the \acp{PMF}, and by $K$ a normalization constant enforcing $\sum_{i=0}^{q-1}v_i=1$. Each \ac{VN} $\vn$ then computes \label{itm:VN-to-CN-NBP}
    \begin{equation}
\vecv=K\!\!\!\!\hadProd_{\cn'\in\neigh{\vn}\setminus\{\cn\}}\!\!\!\!\msgVec{\cn'}{\vn}
    \end{equation}
    and then sends to its neighboring \ac{CN} $\cn$ the message 
    \begin{equation}
        \msgVec{\vn}{\cn}=\vecv\cdot\permMat_{\cn,\vn}. 
    \end{equation}

    \item \textbf{Final decision.} After iterating steps~\ref{itm:CN-to-VN-NBP} and~\ref{itm:VN-to-CN-NBP} at most $\ell_\text{max}$ times, the final decision at each \ac{VN} $\vn$ is 

    \begin{equation}
        \hat{x}=\underset{x\in\ring_q}{\arg\max}\, v_x^{\scriptscriptstyle \textsf{APP}}
    \end{equation}
    where 
    \begin{equation}
       \vecv^{\scriptscriptstyle \textsf{APP}}=\hadProd_{\cn\in\neigh{\vn}}\msgVec{\cn}{\vn}.
    \end{equation}
\end{enumerate}

\subsubsection{Symbol Message Passing (SMP) Decoding}
Under \ac{SMP} decoding each message exchanged by a \ac{VN}/\ac{CN} pair is a symbol, i.e., an hard estimate of the codeword symbol associated with the \ac{VN}. Thanks to this, \ac{SMP} allows remarkable savings in the internal decoder data flow, compared to \ac{BP} decoding. Following the principle outlined in \cite{lechner2011analysis}, the messages from \acp{CN} to \acp{VN} are modeled as observations at the output a $q$-ary input, $q$-ary output \ac{DMC}. By doing so, the messages at the input of each \ac{VN} can be combined by multiplying the respective likelihoods (or by summing the respective log-likelihoods).

Given a \ac{DMC} $P_{Y| X}(y|x)$ and a channel output $y\in\ring_q$, we define the log-likelihood vector ($\vecL$-vector)
\begin{equation}
    \vecL(y):=(L_0(y),L_1(y),\dots,L_{q-1}(y)))
\end{equation}
where $L_x(y)=\log\left(P_{Y| X}(y\mid x)\right)$.

\begin{enumerate}
    \item \textbf{Initialization.} 
    Each \ac{VN} $\vn$ sends the corresponding Lee channel observation
        $\msg{\vn}{\cn}=y$
    to all $\cn\in\neigh{\vn}$.

    \item \textbf{CN-to-VN step.} Each \ac{CN} $\cn$ computes \label{itm:CN-to-VN-SMP}
    \begin{equation}
        \msg{\cn}{\vn} =h_{\cn,\vn}^{-1}\sum_{\vn'\in\neigh{\cn}\setminus\{\vn\}} h_{\cn,\vn'}\msg{\vn'}{\cn}.
    \end{equation}
    
    \item \textbf{VN-to-CN step.} Define the aggregated extrinsic $\vecL$-vector \label{itm:VN-to_CN-SMP}
    \begin{equation}\label{eq:def_E_vec}
\vecE=\vecL(y)+\sum_{\cn'\in\neigh{\vn}\setminus\{\cn\}}\vecL\left(\msg{\cn'}{\vn}\right).
    \end{equation}
    Under the \ac{qSC} approximation, each extrinsic channel from \ac{CN} $\cn'$ to \ac{VN} $\vn$ is modeled according to 
    \begin{equation}
        P_{M| X}(m| x)=
        \begin{cases}
            1-\xi & \text{if } m=x
            \\ 
            \xi/(q-1) & \text{otherwise}
        \end{cases}\label{eq:QSCapproxdec}
    \end{equation}
    where, for the sake of computing $\vecL(\msg{\cn'}{\vn})$, the iteration-dependent extrinsic channel error probability $\xi$ can be obtained from the \ac{DE} analysis (as described in the Section \ref{subsubsec:DESMP}).
    Then the \ac{VN}-to-\ac{CN} messages are 
    \begin{equation}
        \msg{\vn}{\cn}=\underset{x\in\ring_q}{\arg\max} \, E_x.
    \end{equation}
    
    \item \textbf{Final decision.} After iterating steps \ref{itm:CN-to-VN-SMP} and \ref{itm:VN-to_CN-SMP} at most $\ell_\text{max}$ times, the final decision at each \ac{VN} $\vn$ is 
    \begin{equation}
        \hat{x}=\underset{x\in\ring_q}{\arg\max}\, L_x^{\scriptscriptstyle \textsf{FIN}}
    \end{equation}
    where
    
    \begin{equation}\label{eq:final}
        \vecL^{\scriptscriptstyle \textsf{FIN}}=\vecL(\mch)+\sum_{\cn\in\neigh{\vn}}\vecL\left(\msg{\cn}{\vn}\right).
    \end{equation}
\end{enumerate}

\subsection{The $q$SC-Assumption}
The choice of the \ac{DMC} used to model the extrinsic channel plays a crucial role for the performance of the \ac{SMP} algorithm. In \cite{lechner2011analysis}, for the case of \ac{BMP} decoding, it was suggested to model the \ac{VN} inbound messages as observations of a \ac{BSC}, whose transition probability was estimated by  \ac{DE} analysis. The approach was generalized in \cite{Lazaro19:SMP} for \ac{SMP}, where the \ac{VN} inbound messages are modelled as observations of a \ac{qSC}. We will also model the extrinsic channel as a \ac{qSC} defined in \eqref{eq:QSCapproxdec}, although in our setting the model holds only in an approximate sense.
The use of the \ac{qSC} approximation is particularly useful from a practical viewpoint since it simplifies the \ac{VN} processing in \ac{SMP} decoding. Note moreover that for \ac{LDPC} codes over finite fields, the extrinsic channel transition probabilities, averaged over a uniform distribution of nonzero elements in the parity-check matrix, yield a \ac{qSC} \cite{Lazaro19:SMP}.

For the case of $\ring_q$ where $q$ is non-prime, the average extrinsic channel transition probabilities do not describe a \ac{qSC}. Nevertheless, if the units $x\in\unitset$ are used to label the graph edges with uniform probability, we expect that for an integer ring consisting of relatively many units the \ac{qSC} approximation should turn to be accurate. To provide some empirical evidence of this conjecture, we adopt the methodology used in \cite{xie2006accuracy} to support the use of the Gaussian approximation in the \ac{DE} analysis of \ac{BP} decoding of binary \ac{LDPC} codes. In particular, we show numerically that the \ac{TV} distance between the extrinsic channel distribution and the \ac{qSC} is generally small, and it vanishes with the number of iterations. The \ac{TV} distance of two probability distributions $P$ and $Q$ over the same discrete alphabet $\mathcal{X}$ is defined as \cite[Proposition 4.2]{wilmer2009markov} 
\begin{align*}
    \mathsf{TV}(P, Q) := \dfrac{1}{2} \sum_{x \in \mathcal{X}} \left\vert P(x) -Q(x) \right\vert .
\end{align*}
In Figure \ref{fig:TV_(3,6)} and \ref{fig:TV_(4,8)} we show for $\ring_8$, $\ring_9$ and $\ring_{12}$, that the \ac{TV} indeed tends to zero when performing Monte Carlo simulations for different numbers of iterations of the \ac{SMP} decoder for some choices of $\delta$ and different regular LDPC code ensembles. 
Iterative decoding thresholds of some regular nonbinary \ac{LDPC} code ensembles can be found in Table \ref{tab:dec_thresholds}. We have chosen these three finite integer rings to cover different cases for the relative number of unit elements, i.e. the fraction of units $\mathcal{U}_q$ in $\ring_q$ are  $\mathcal{U}_8 = 1/2$, $\mathcal{U}_9 = 2/3$ and $\mathcal{U}_{12} = 1/3$. 
 The figures support the statement that, for integer rings with relatively few unit elements, the approximation  is less accurate in the first iterations. Note that the first few iterations play an important role in determining the iterative decoding threshold.

\begin{figure}[htp]
    \centering
    \includegraphics[width = \linewidth]{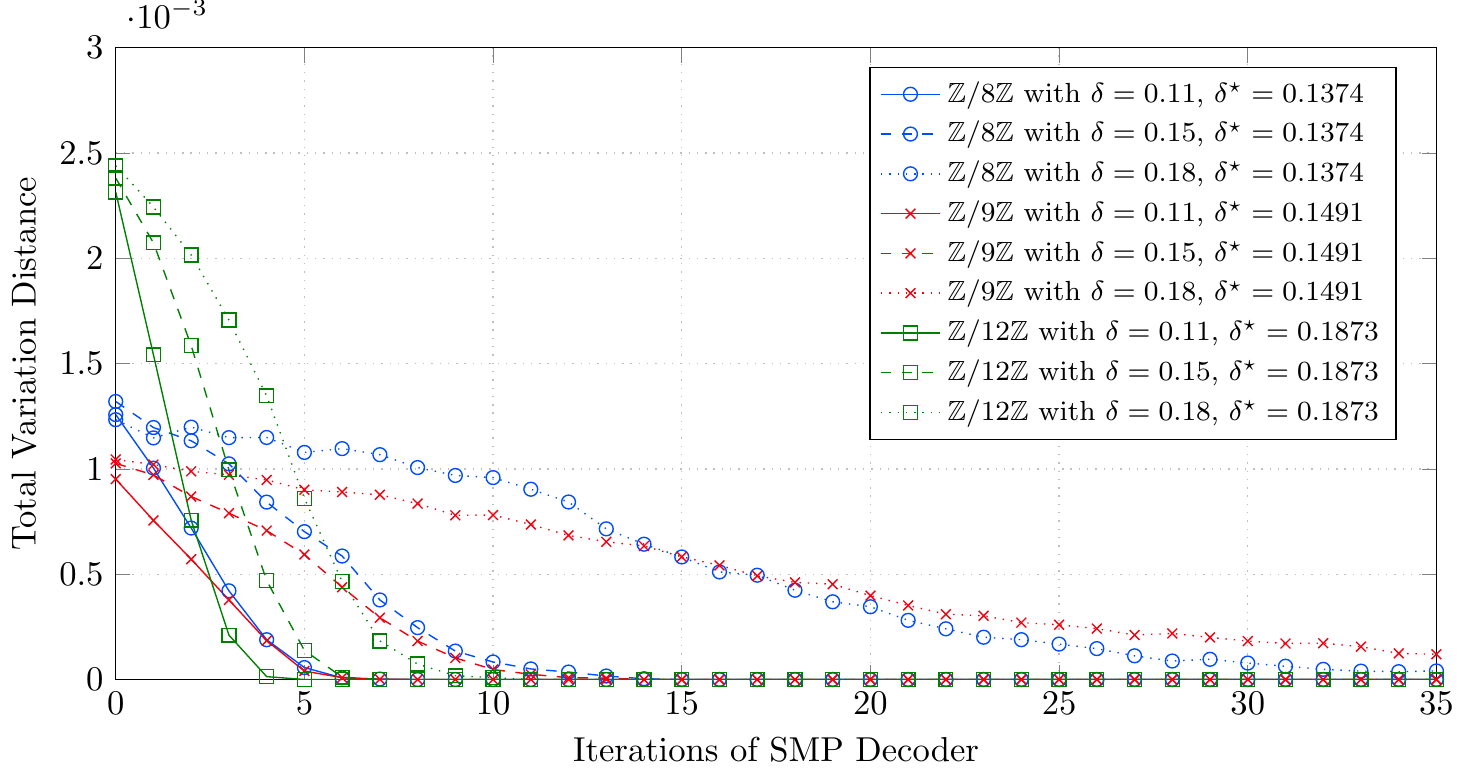}

    \caption{Evolution of the \ac{TV} distance between the extrinsic channel distribution and the \ac{qSC} for regular $(3, 6)$ LDPC code ensembles in the \ac{SMP} decoder.}
    \label{fig:TV_(3,6)}
\end{figure}
\begin{figure}[htp]
    \includegraphics[width = \linewidth]{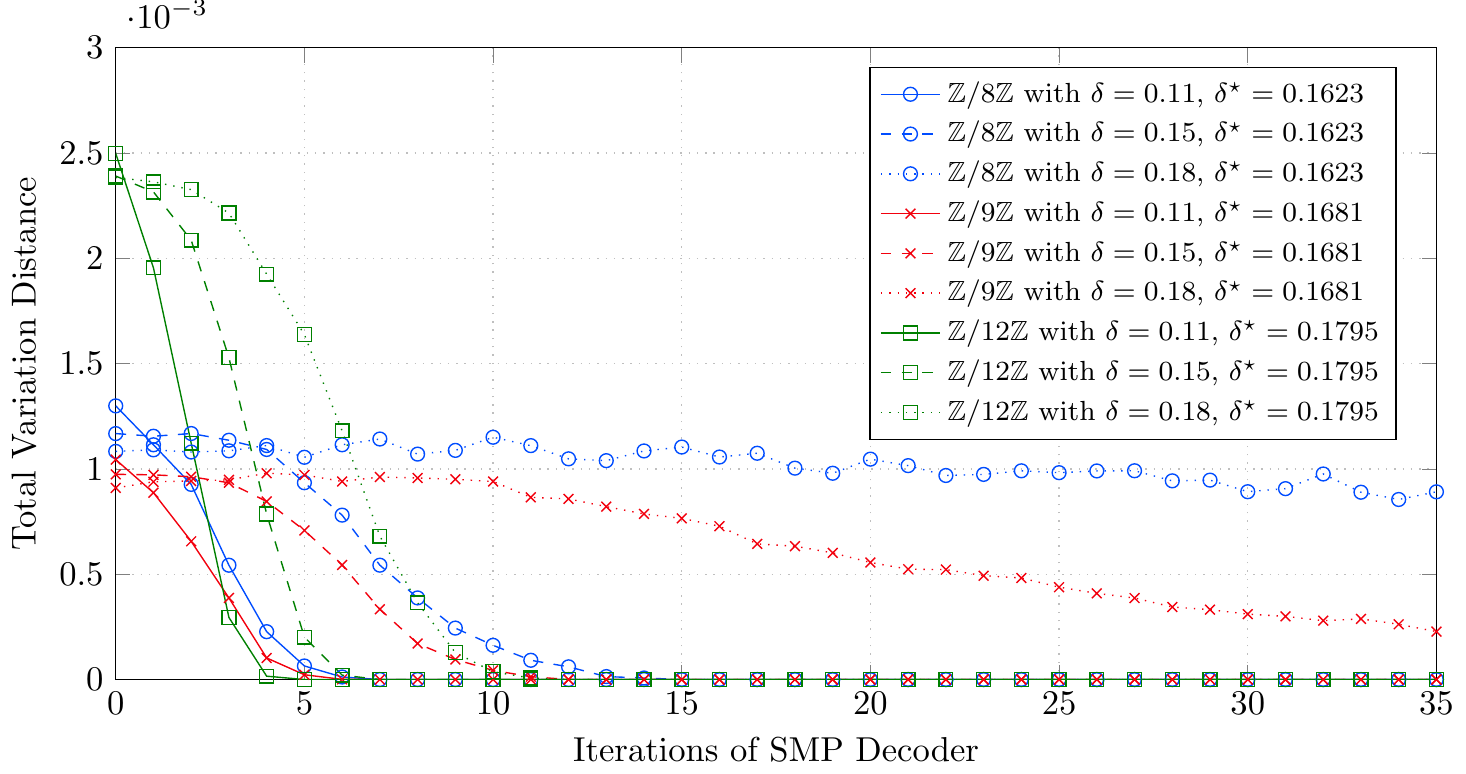}

    \caption{Evolution of the \ac{TV} distance between the extrinsic channel distribution and the \ac{qSC} for regular $(4, 8)$ LDPC code ensembles in the \ac{SMP} decoder.}
    \label{fig:TV_(4,8)}
\end{figure}

\subsection{Density Evolution Analysis}

We analyze next the performance of regular \ac{LDPC} code ensembles on $\ring_q$, over the Lee channel, from a \ac{DE} viewpoint. In particular, we estimate the iterative decoding threshold over the Lee channel~\eqref{eq:Lee_channel} under \ac{BP} and \ac{SMP} decoding.
The iterative decoding threshold $\thr$ is the largest value of the channel parameter $\delta$ \eqref{eq:delta} for which, in the limit of large $n$ and large $\ell_\text{max}$, the symbol error probability of code picked randomly from the ensemble becomes vanishing small~\cite{studio3:richardson01capacity}.
For \ac{BP} decoding, we resort to the \ac{MCM}, while for \ac{SMP} decoding the analysis is outlined next.

\subsubsection{Density Evolution Analysis for \ac{SMP}}\label{subsubsec:DESMP}

The \ac{DE} analysis for \ac{SMP} plays a two-fold role: it allows to estimate the decoding threshold $\thrSMP$ and it provides estimates for the error probabilities $\xi$ of the extrinsic \ac{qSC} which have to be used by the decoder in \eqref{eq:def_E_vec}, \eqref{eq:final}.
We now briefly sketch the \ac{DE} analysis for \ac{SMP} over a \ac{qSC} from~\cite[Sec.~IV]{Lazaro19:SMP} and highlight the respective modifications to estimate the iterative decoding threshold $\thrSMP$ as well as the extrinsic channel error probabilities $\xi$ for transmissions over the Lee channel~\eqref{eq:Lee_channel}.

Due to the linearity of the code and the symmetry of the Lee channel, for the analysis we assume the transmission of the all-zero codeword. Let $\msgRV{\vn}{\cn}$ denote the messages from \ac{VN} $\vn$ to \ac{CN} $\cn$ in the $\ell$-th iteration and define 
\begin{equation}
    p_a^{(\ell)}:=\Pr\left\{\msgRV{\vn}{\cn}=a\mid X=0\right\}.
\end{equation}
For the Lee channel we initialize the \ac{DE} routine from~\cite[Sec.~IV]{Lazaro19:SMP} with the probabilities 
$p_a^{(0)}=P_{Y|X}(a\mid 0)$, $\forall a\in\ring_q$,
where $P_{Y|X}(y|x)$ is the Lee channel transition probability from~\eqref{eq:Lee_channel}.
The remaining steps of the \ac{DE} analysis remain the same as in~\cite[Sec.~IV]{Lazaro19:SMP} except for the definition of the aggregated extrinsic $\vecL$-vector $\bm{E}$ in~\eqref{eq:def_E_vec}.
For the Lee channel, the entries of $\bm{E}$ in the $\ell$-th iteration are given by 
\begin{equation}
    E_b^{(\ell)}=L_0(b)+\mathsf{D}(\xi^{(\ell)})f_b^{(\ell-1)} \qquad\forall b\in\ring_q
\end{equation}
where $\mathsf{D}(\epsilon):=\log(1-\epsilon)-\log(\epsilon/(q-1))$, $\xi^{(\ell)}$ denotes the extrinsic channel error probability and $f_b^{(\ell)}$ denotes the number of \ac{CN}-to-\ac{VN} message taking the value $b\in\ring_q$ in the $\ell$-th iteration.
The decoding threshold is then obtained as the maximum expected normalized Lee weight $\thrSMP$ of a Lee channel distribution~\eqref{eq:Lee_channel} such that $p_{0}^{(\ell)}\rightarrow 1$ as $\ell\rightarrow\infty$.
Decoding thresholds for $\ens_{3,6}$ and  $\ens_{4,8}$ regular \ac{LDPC} code ensembles with $q$ ranging from $5$ to $8$ are given in Table~\ref{tab:dec_thresholds}, as well as the Shannon limit $\thrSH$ for rate $R = 1/2$.

\renewcommand{\arraystretch}{1}
\begin{table}
\caption{Decoding thresholds for regular nonbinary \ac{LDPC} code ensembles under \ac{BP} and \ac{SMP} decoding.}
\label{tab:dec_thresholds}

\centering
\renewcommand{\arraystretch}{1.2}
\begin{tabular}{c|c|c|c|c}

 \hline
 $~~~q~~~$ & $~~~(v,c)~~~$ & $~~~\thrNBP~~~$ & $~~~\thrSMP~~~$ & $~~~\thrSH~~~$
 \\ \hline\hline 
 \multirow{2}{*}{$5$} & $(3,6)$ & $0.2148$ &   $0.1039$ & \multirow{2}{*}{0.2684}
 \\
  & $(4,8)$ & $0.1802$ & $0.1200$ &
 \\\hline
 \multirow{2}{*}{$7$} & $(3,6)$ & $0.3086$ &  $0.1261$ & \multirow{2}{*}{0.3560}
 \\
  & $(4,8)$ & $0.2686$ & $0.1539$  &
 \\\hline
 \multirow{2}{*}{$8$} & $(3,6)$ & $0.3135$ & $0.1374$ & \multirow{2}{*}{0.3950}
 \\
  & $(4,8)$ & $0.26904$ & $0.1623$ &
 \\\hline
\end{tabular}
\end{table}

\subsection{Numerical Results}
In the following, we present numerical results for both \ac{BP} and \ac{SMP} decoding and we compare them to the \ac{LSF} decoder, for which we assumed a decoding threshold $\tau = \frac{d_\vn}{2}$, where $d_\vn$ denotes the variable nodes degree, as the authors suggest. The results, provided in terms of block error rates for $(3,6)$ regular nonbinary \ac{LDPC} codes of length $256$ symbols, are obtained via Monte Carlo simulations. The codes parity-check matrices have been designed via the \ac{PEG} algorithm \cite{HEA05}, with the nonzero coefficients drawn independently and uniformly in $\unitset$.
For the constant-weight Lee channel, the error vectors are drawn uniformly at random from the set of vectors with a given weight. For the case of the (memoryless) Lee channel, we computed a finite-length performance benchmark via the normal approximation  of \cite{CCR10}.

Figure \ref{fig:Lee256}, shows the block error probability over memoryless Lee channels. The impact of the order $q$ on the achievable performance is well captured by the \ac{RCU} bounds. In particular, for a given target block error rate, a larger average normalized Lee weight $\delta$ can be supported for larger $q$. The result applies to the performance of the $(3,6)$ \ac{LDPC} codes as well, under both \ac{BP} and \ac{SMP} decoding, with one key exception: while under \ac{BP} decoding a small gain is achieved by moving from $\ring_7$ to $\ring_8$, under \ac{SMP} decoding no performance gain is observed. The reason for this could lay in the \ac{qSC} assumption \eqref{eq:QSCapproxdec} used by the \ac{SMP} decoder,  which holds only in an approximate sense for the case of non-prime rings. The effect is visible over the constant-weight Lee channel too, as depicted in Figure \ref{fig:CWLee}. Both figures show that the \ac{SMP} outperforms the \ac{LSF}, even though in the non-field case for the \ac{SMP} we used the \ac{qSC} assumption \eqref{eq:QSCapproxdec} for the extrinsic channel. We acknowledge that the LSF decoder from \cite{santini2020low} was originally introduced and designed for a special class on \ac{LDPC} codes (namely, for low-Lee-density parity-check codes), and its performance might be enhanced by taking into account the differences between the two code classes. While this point will be subject of further investigations, we believe that an important role in the performance gain under \ac{SMP} decoding relies on its capability to exploit the knowledge of the error marginal distribution.
The block error rate result achieved by \ac{BP} and \ac{SMP} decoding matches well the \ac{DE} analysis, with threshold differences that are reproduced in the finite length results by the gaps among the block error rate curves. As expected, \ac{BP} decoding outperforms \ac{SMP} decoding. Nevertheless, the \ac{SMP} algorithm shows a performance that is appealing for applications demanding low-complexity decoding \cite{santini2020low}.

\begin{figure}[htp]
    \centering\includegraphics[width = \linewidth]{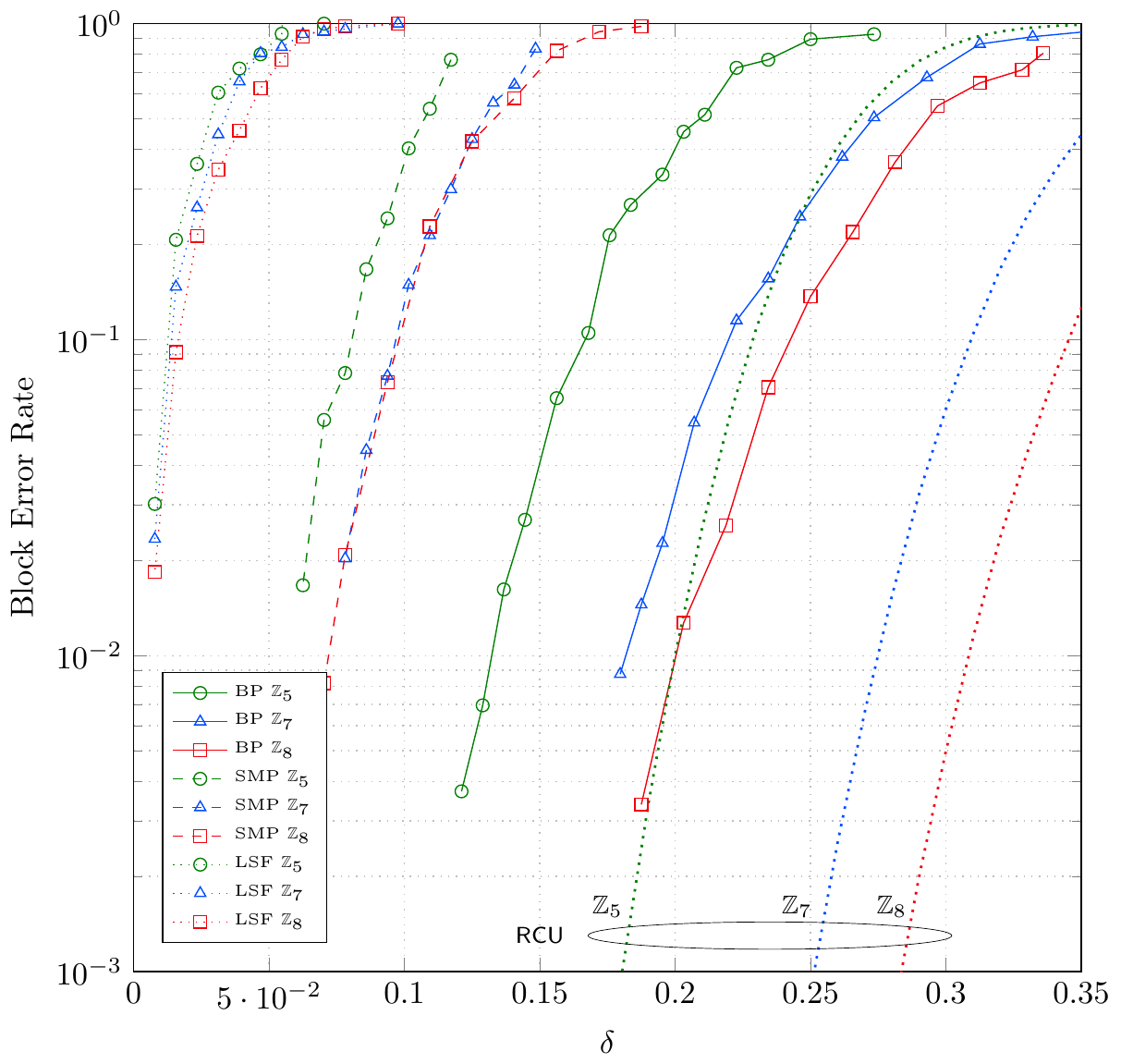}

    \caption{Block error rate vs. $\delta$ for regular $(3, 6)$ nonbinary \ac{LDPC} codes of length $n = 256$. Memoryless Lee channel.}\label{fig:Lee256}
\end{figure}
\begin{figure}[htp]
    \includegraphics[width = \linewidth]{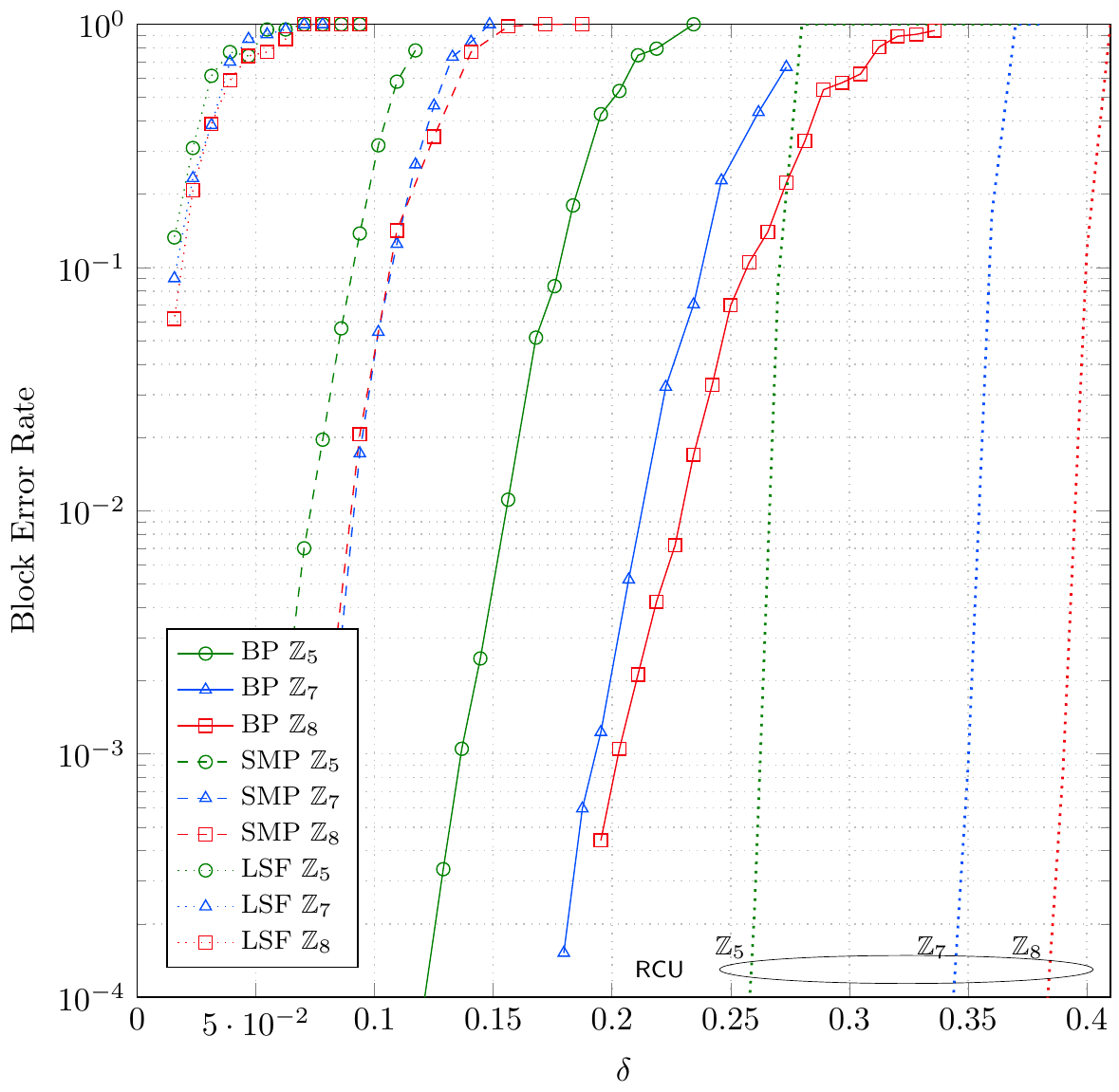}
    \caption{Block error rate vs. $\delta$ for regular $(3, 6)$ nonbinary \ac{LDPC} code ensembles of length $n = 256$. Constant-weight Lee channel.}\label{fig:CWLee}
\end{figure}

\section{Conclusions}\label{sec:conc}

The performance of nonbinary low-density parity-check (LDPC) codes over finite integer rings has been studied, over two channels that arise from the Lee metric. The first channel is a discrete memory-less channel matched to the Lee metric, whereas the second channel adds to each codeword an error vector of constant Lee weight. It is shown that the marginal conditional distribution of the two channels coincides, in the limit of large block lengths. The result is used to provide a suitable marginal distribution to the initialization of the message-passing decoder of LDPC codes. The performance of selected LDPC code ensembles, analyzed by means of density evolution and finite-length simulations under belief propagation (BP) and symbol message passing (SMP) decoding, shows that BP decoding largely outperforms SMP decoding. Nevertheless, the SMP algorithm retains a performance that is appealing for applications (e.g., code-based cryptosystems in the Lee metric) demanding low-complexity decoding.





\bibliography{IEEEabrv,biblio}

\end{document}